\newtheorem{prop}{Proposition}
\begin{document}

\title{Online Caching with Wireless Fronthauling and Delivery in Fog-Aided Networks}

\author{ Seyyed Mohammadreza Azimi
% Seyyed Mohammadreza~Azimi, %~\IEEEmembership{Student member,~IEEE,}
 %       Osvaldo~Simeone,~%~\IEEEmembership{Fellow,~IEEE,}
%         Avik~Sengupta%~\IEEEmembership{Member,~IEEE}% <-this % stops a space
 %       ~and ~Ravi~Tandon
%\thanks{S. M. Azimi is with the CWiP, Department
%of Electrical and Computer Engineering, New Jersey Institute of Technology, Newark,
%NJ, USA. E-mail: (sa677@njit.edu). O. Simeone is with Centre for Telecommunications Research
%Department of Informatics, King's College London, UK. Email: (osvaldo.simeone@kcl.ac.uk) }%
%}
\thanks{Seyyed Mohammadreza Azimi is with the CWiP, Department of Electrical and Computer Engineering, New Jersey Institute of Technology, Newark,
NJ, USA. E-mail: (sa677@njit.edu)}}

% The paper headers
%\markboth{IEEE Communications Letters}%
%{Shell \MakeLowercase{\textit{et al.}}: Bare Demo of IEEEtran.cls for IEEE Journals}

\maketitle

% As a general rule, do not put math, special symbols or citations
% in the abstract or keywords.
\begin{abstract}
Fog  Radio  Access  Network  (F-RAN)  exploits cached contents at edge nodes (ENs)  and  fronthaul connection to the cloud for content delivery. Assuming  dedicated fronthaul links between cloud and each EN, previous works focused on analyses of F-RANs using offline or online caching depending whether the content popularity is  time-invariant or time-variant. Extension has been done for multicast fronthaul link connecting cloud to only two ENs and time-invariant popularity. In contrast, the scope of this work is on the case where multicast fronthaul link connects arbitrary number of ENs to the cloud and content popularity is time-variant. Normalized  Delivery  Time  (NDT) is used as a performance measure and by investigating  proactive online caching,  analytical results reveal that the power scaling of fronthaul transmission sets a limit on the performance of F-RAN.   \end{abstract}

% Note that keywords are not normally used for peerreview papers.
\begin{IEEEkeywords}
 Online  caching, Fog-network, F-RAN, $5$G.
\end{IEEEkeywords}

\IEEEpeerreviewmaketitle

\section{Introduction}
To deliver delay-sensitive multimedia content in $5G$, Fog  Radio  Access  Network  (F-RAN) synergizes edge processing using  cache-aided edge nodes (ENs) and cloud processing using fronthaul connection to the cloud while  Cloud  Radio  Access  Network  (C-RAN) only relies on the latter \cite{OS}.  Assuming time-invariant content, offline caching utilizes separate content placement at ENs and delivery phases in order to serve the users \cite{Avik2}. Time-variability of content necessitate \emph{online} caching with simultaneous cache update and delivery \cite{JSACAzimi}. Offline caching is evaluated in \cite{Avik2} using the performance metric Normalized  Delivery  Time  (NDT) and dedicated fronthaul link between cloud and each EN. Extention to the time-variant popularity is considered in \cite{JSACAzimi}. In contrast, Koh et. al. \cite{Koh} considered time-invariant popularity with wireless multicast fronthaul link connecting cloud to two ENs. The key result was that \textit{pipelined transmission} on the wireless multicast fronthaul channel provides better performance than \textit{coded transmission}. Using pipelined transmission, in this letter, we focus on  \emph{online} caching scenario, the contributions are:     
$(i)$ Deriving NDT for offline caching with arbitrary number of users and ENs as opposed to 2-by-2 case in \cite{Koh}.  
$(ii)$ Proactive online caching with pipelined fronthaul-edge transmission is considered with wireless multicast fronthauling and achievable long-term NDT is derived. This contrasts with \cite{JSACAzimi} in which online caching is utilized for dedicated fronthaul links. 
 
\noindent \textit{Notation}: $X_{[a:b]}=[X_a,X_{a+1},...,X_{b}]$ with $a,b \in \mathbb{N}$ and $a \leq b$ while $X_{[a:b]}=\emptyset$ with $a>b$.  $H(X)$, $h(X)$ and $I(X,Y)$ denote the entropy of $X$, differential entropy of $X$ and the mutual information between $X$ and $Y$.   
   
\section{System Model}\label{sysmo1}
An $M \times K$ F-RAN is considered in which $M$ cache-equipped ENs are connected to the cloud using wireless multicast fronthaul link and serve $K$ users through wireless edge channel. $\mathcal{F}_t$ denotes the set of popular contents at time slot $t$  and it is comprised of $N$ files each of them of size $L$ bits. Each EN can store $\mu NL$ bits and $0 \leq \mu \leq 1$ is defined as the \emph{fractional cache capacity}.
At time slot $t$, users request files from the set $\mathcal{F}_t$. The indices of the uniformly requested files without replacement are denoted by  $d_t=(d_{1,t},...,d_{K,t})$, where $d_{k,t}$ is the requested file by user $k$. 
As proposed in \cite{JSACAzimi}, two-state Markov model is considered to model time varying popularity of contents. At time slot $t$,
with probability $p$, there is an arrival in the set $\mathcal{F}_t$ by randomly replacing a file from the set $\mathcal{F}_{t-1}$; while with probability $1-p$, there is not any arrival  in the popular contents and $\mathcal{F}_t=\mathcal{F}_{t-1}$.
% with probability $1-p$, there is not a new popular content and $\mathcal{F}_t=\mathcal{F}_{t-1}$; while, with probability $p$, a new popular file is added to the set $\mathcal{F}_t$ by replacing a file selected uniformly at random from $\mathcal{F}_{t-1}$. 
The received signal at the $k$th user in any symbol of the time slot $t$ is
\begin{equation}\label{rcvr}
Y_{k,t}=\sum_{m=1}^{M}H_{k,m,t}X_{m,t} + Z_{k,t},
\end{equation}
where $H_{k,m,t}$ is the channel gain between $m$th EN and $k$th user at time slot $t$; $X_{m,t}$ is the transmitted signal by the $m$th EN; and $Z_{k,t} \sim \mathcal{CN}(0,1)$ is additive noise at $k$th user.  It is assumed that  channel coefficients are independent and identically distributed following a continuous distribution and time-invariant within each transmission interval. 
%At time slot $t$, all the ENs, cloud and users have access to the CSI about the wireless downlink channels $H_t=\{\{H_{k,m,t} \}_{k=1}^{K} \}_{m=1}^{M}$.
%The ENs transmit to users using their cached contents as well as the received signal on the wireless fronthaul link. 
At time slot $t$, the received signal at $m$th EN is 
\begin{equation}\label{rcvr2}
V_{m,t}=G_{m,t}U_{t} + W_{m,t},
\end{equation}
where $G_{m,t}$ denotes the wireless channel between cloud and $m$th EN, $U_t$ is the transmitted signal by the cloud in channel use $t$ and $W_{m,t}$ is additive noise at $m$th EN. The cloud has a power constraint $T_{F,t}^{-1}\sum_{t=1}^{T_{F,t}} \lvert U_{t} \rvert ^2 \leq P ^r $ with $T_{F,t}$ is the duration (in symbols) of the fronthaul transmission $U_{t}$ in time slot $t$ and $ r \geq 0$ describes the power scaling of the fronthaul transmission as compared to edge transmission. 
At time slot $t$, all the ENs, cloud and users have access to the CSI about the wireless downlink channels $H_t=\{\{H_{k,m,t} \}_{k=1}^{K} \}_{m=1}^{M}$ and the CSI about the wireless fronthaul channel $G_t=\{G_{m,t} \}_{m=1}^{M}$.
%At each time slot $t$, the CSI about the wireless fronthaul channel $G_t=\{G_{m,t} \}_{m=1}^{M}$ is available to the ENs and cloud.

 \noindent   \textit{Fronthaul policy}: In time slot $t$, the cloud sends  message $U_{t}$ and it is obtained by the mapping $(d_t,H_{t},G_t,S_{t}) \rightarrow U_{t}$,  with cached contents at ENs  as $S_t=\{S_{m,t} \}_{m=1}^{M}$, fronthaul/edge CSI as $G_{t}$/$H_{t}$ and the demand vector as $d_t$. 

 \noindent  \textit{Caching policy}:  In time slot $t$, the cached content at $m$th EN, $S_{m,t}$, is updated using the received fronthaul message $V_{m,t}$ and cached content of previous slot $S_{m,t-1}$. To meet the cache capacity requirement, we have $H(S_{m,t}) \leq \mu NL$. 

 \noindent \textit{Edge transmission policy}: The transmitted signal by EN $m$  at any time instant $j$ within a time slot $t$ is denoted by $X_{m,t,j}$ and obtained by the mapping $(d_t,H_{t},G_t,S_{m,t},V_{j',t}) \rightarrow X_{m,t,j}$ with $d_t$ as the demand vector, $G_{t}$/$H_{t}$ as fronthaul/edge CSI, $S_{m,t}$ as the cache content, $V_{j',t}$ as the fronthaul messages received at previous instants $j' \leq j-1$.

%\noindent \textit{Decoding policy}: Each user $k$ maps its received signal $Y_{k,t}$ in \eqref{rcvr} over a number $T_{E,t}$ of channel uses to an estimate $\hat{F}_{d_{t,k}}$ of the demanded file $F_{d_{t,k}}$.

%Denoting a joint fronthaul/caching/edge transmission and decoding policy as $\Pi$, a sequence of policies $\Pi$ indexed by the file size $L$ is said to be \textit{feasible} if the probability of error 
%$\textrm{P}_{e,t}=\underset{k \in \{1,...,K\}}{\max}\textrm{Pr}(\hat{F}_{d_{t,k}} \neq F_{d_{t,k}}) \rightarrow 0 $ when $L \rightarrow \infty$. 

\noindent \textit{Long-term Normalized Delivery Time (NDT)}:
Denoting the transmission time of pipelined  strategy during slot $t$ as $T^{pl}_{t}$, the NDT at time slot $t$ is \cite{JSACAzimi} 
\begin{equation}\label{NDT_P}
\delta^{pl}_t(\mu,r)=\underset{L \rightarrow \infty}{\lim} \underset{P \rightarrow \infty}{\lim}\frac{\textrm{E}_{\mathcal{F}_t,H_{t},d_{t}}[T^{pl}_{t}]}{L/\log(P)}
\end{equation}
and, the \textit{long-term NDT} is defined as
\begin{equation}\label{avNDT_P}
\bar{\delta}^{pl}_{\text{on}}(\mu,r)=\underset{T \rightarrow \infty}{\lim \sup} \frac{1}{T}\underset{t=1}{\overset{T}{\sum}} \delta^{pl}_t(\mu,r).
\end{equation}
The minimum long-term NDT is denoted as $\bar{\delta}^{pl^*}_{\text{on}}(\mu,r)$. 

\section{Offline caching}\label{OFF}
%In offline caching, the popular set is time invariant i.e., $\mathcal{F}_t=\mathcal{F}$.
%In \cite{Avik2}, three baseline different delivery approaches are proposed. 
In case of time-invariant popular set, namely $\mathcal{F}_t=\mathcal{F}$,  offline caching can be used with the following baseline delivery approaches \cite{Avik2}.
% In \cite{Avik2}, three baseline different delivery approaches are proposed. 

\noindent \textit{EN cooperation}: When $\mathcal{F}$ is cached in each EN, using joint Zero-Forcing (ZF) precoding, the resulting edge/fronthaul-NDTs\footnote{Edge/fronthaul NDTs have the same definition as \eqref{NDT_P} with the only caveat that they are defined for edge or fronthaul transmission \cite{JSACAzimi}.} are \cite{Avik2}
\begin{equation}\label{coop_e}
\delta_{\text{E,Coop}}=\frac{K}{\min\{M,K\}} ~\text{and}~\delta_{\text{F,Coop}}=0.
\end{equation}
\noindent \textit{EN coordination}: When non-overlapping fractions of each file is cached in each EN, fractions of requested files can be sent by each EN. Using interference alignment, we have 
\begin{equation}\label{coor_e}
\delta_{\text{E,Coor}}=\frac{M+K-1}{M}, ~\text{and}~\delta_{\text{F,Coor}}=0.
\end{equation}
\noindent \textit{C-RAN transmission}: In this mode, only cloud and fronthaul resources are used. In the worst case when distinct files are requested, multicasting $KL$ bits on the wireless fronthaul link results in $T_F=KL/(rlog(P))$. Then, ZF precoding at ENs is used. The resulting edge/fronthaul-NDTs are \cite{Avik2}
\begin{equation}\label{cran_e}
\delta_{\text{E,C-RAN}}(r)=\frac{K}{\min\{M,K\}} ~\text{and}~\delta_{\text{F,C-RAN}}(r)=\frac{K}{r}.
\end{equation}
%\textbf{Offline caching policy}: 
%The caching follows from the same approach proposed in \cite{JSACAzimi}. If $r \leq r_{th}= K(M-1)/ (M (\min \{M,K\}-1)) $ and $\mu \leq 1/M$, non-overlapping fractions of each popular file is stored at different ENs leaving $(1 - \mu M)$-fraction of each file only available using cloud, while for  $\mu \geq 1/M$, this approach results in $(1-\mu)/(M-1)$-fraction of each file is exclusively cached at any given EN while $(\mu M -1)/(M-1)$-fraction of each file is shared among all ENs.  Instead, if $r \geq r_{th}$, a common $\mu$-fraction of each file is cached at all the ENs while the remaining $(1-\mu)$-fraction of each file is only available using fronthaul connection to the cloud.     
%In \textit{serial fronthaul-edge delivery}, information is sent sequentially on the fronthaul and edge channels. Alternatively, 
\textit{Pipelined fronthaul-edge delivery} relies on simultaneous transmission on the fronthaul/edge channels. Denoting fronthaul/edge-NDTs by $\delta_{E}$/$\delta_{F}$, the NDT of pipelined scheme is \cite{Avik2}
\begin{equation}\label{def_pipe_off}
\delta^{pl}_{\text{off}}=\max \big ( \delta_{E},\delta_{F} \big ).
\end{equation}
The following propositions provide lower bound on the minimum offline NDT as well as achievable offline NDT. 
\begin{prop}\label{Lb_Off_Lem}
\noindent For $M \hspace{-1mm} \times \hspace{-1mm} K$ F-RAN with $N \hspace{-1mm} \geq \hspace{-1mm} K $, the minimum offline NDT is $\delta^{pl^*}_{\emph{off}}(\mu,r) \hspace{-1mm}  \geq  \hspace{-1mm} \delta^{pl}_{\emph{off,lb}}$ with
%\begin{align}
%\delta^{pl^*}_{\emph{off}}(\mu,r)  \hspace{-1mm} \geq \hspace{-1mm} \delta^{pl}_{\emph{off,lb}} \triangleq  \hspace{-5mm} \underset{l \leq \min \{ M,K\}}  \max  \hspace{-1mm} \Bigg (\hspace{-1mm}\frac{K-(M-l)(K-l)\mu}{l + r},\hspace{-1mm}\frac{K}{\min (M,K)} \hspace{-1mm}\Bigg ).   \label{LB_OFF}
%\end{align}
\begin{align}
\hspace{-2.5mm}\delta^{pl}_{\emph{off,lb}}  \hspace{-1mm} \triangleq \hspace{-3mm}  \underset{l \leq \min \{ M,K\}}     \max  \hspace{-1mm}  & \Bigg (\hspace{-1mm}\frac{K-(M-l)(K-l)\mu}{l + r}, 
 \frac{K}{\min (M,K)}\hspace{-1mm} \Bigg ).   \label{LB_OFF}
% & ~~~~~~~~
\end{align}
\end{prop}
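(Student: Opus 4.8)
The plan is to prove each of the two arguments of the maximum in \eqref{LB_OFF} as a separate lower bound valid for \emph{every} offline pipelined policy, and then to take the infimum over policies, which by definition is $\delta^{pl^*}_{\emph{off}}(\mu,r)$. Throughout I use three elementary reductions: (i) since requests are drawn without replacement and $N\ge K$, every slot requires conveying $KL$ distinct bits; (ii) the users hold no cache, so all of this information must reach them through the edge channel \eqref{rcvr}; and (iii) under pipelining the edge and the multicast fronthaul both operate within the slot of duration $T^{pl}_t$, so the number of edge symbols $n_E$ and of fronthaul symbols $n_F$ satisfy $n_E,n_F\le T^{pl}_t$.

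\emph{The bound $\delta^{pl^*}_{\emph{off}}(\mu,r)\ge K/\min(M,K)$.} For a.e.\ (generic) realization of $H_t$, the stacked output block $(Y_{1,t}^{n_E},\dots,Y_{K,t}^{n_E})$ in \eqref{rcvr} equals, up to additive noise, the image of the input block under the $K\times M$ channel matrix, which has rank $\min(M,K)$; hence its conditional differential entropy given $H_t$ is at most $\min(M,K)\,n_E\log P+o(n_E\log P)$. Taking a single cut around all $K$ users and applying Fano's inequality gives $KL\le \min(M,K)\,T^{pl}_t\log P+o(L)$, and the limits in \eqref{NDT_P} yield $\delta^{pl}_t(\mu,r)\ge K/\min(M,K)$ for every $t$, hence the claim.

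\emph{The bound $\delta^{pl^*}_{\emph{off}}(\mu,r)\ge\frac{K-(M-l)(K-l)\mu}{l+r}$, for $l\le\min(M,K)$.} Fix $l$ and a set $\mathcal{A}$ of $l$ ENs, and reveal to a genie the edge blocks $\{X_{m,t}^{n_E}\}_{m\in\mathcal{A}}$, the multicast fronthaul codeword $U_t^{n_F}$, the caches $\{S_{m,t}\}_{m\notin\mathcal{A}}$, and all channel-noise realizations (whose differential entropies contribute only $o(L)$ to the estimates below). By the caching and edge-transmission policies, for each $m\notin\mathcal{A}$ the block $X_{m,t}^{n_E}$ is a deterministic function of $(d_t,H_t,G_t,S_{m,t},U_t^{n_F},W_{m,t}^{n_F})$, hence reconstructible from the genie's data; together with $\{X_{m,t}^{n_E}\}_{m\in\mathcal{A}}$ this recovers every edge input, therefore every edge output via \eqref{rcvr}, therefore all $K$ requested files. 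Bounding $h(X_{m,t}^{n_E})\le n_E\log(\pi e P)$, $h(U_t^{n_F})\le n_F\log(\pi e P^{r})$ and $\sum_{m\notin\mathcal{A}}H(S_{m,t})\le (M-l)(K-l)\mu L$, and using $n_E,n_F\le T^{pl}_t$, Fano's inequality gives
\begin{equation}
KL\;\le\;(l+r)\,T^{pl}_t\log P+(M-l)(K-l)\mu L+o(L),
\end{equation}
so that $\delta^{pl}_t(\mu,r)\ge\frac{K-(M-l)(K-l)\mu}{l+r}$; maximizing over $l$ and taking the larger of the two bounds establishes \eqref{LB_OFF}.

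The step I expect to be the real obstacle is the cache inequality $\sum_{m\notin\mathcal{A}}H(S_{m,t})\le(M-l)(K-l)\mu L$: the size constraint $H(S_{m,t})\le\mu NL$ only yields the weaker $(M-l)\mu NL$ whenever $N>K-l$. I would recover the sharp form by a Maddah--Ali--Niesen-type amortization in the spirit of \cite{Avik2}: instead of a single demand, use a family of demand vectors that share a fixed block of $l$ file indices while cycling the remaining $K-l$ indices over as many of the $N$ files as possible; disclose the per-demand pair $(\{X_{m,t}^{n_E}\}_{m\in\mathcal{A}},U_t^{n_F})$ for each member of the family but only one copy of $\{S_{m,t}\}_{m\notin\mathcal{A}}$; decode every file covered by the family; and divide by the number of demands, which amortizes the cache contribution down to $(K-l)\mu L$ per EN outside $\mathcal{A}$ in the many-files regime. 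Making this counting precise --- selecting the family, handling $N$ not being a multiple of $K-l$ with the appropriate ceiling, and keeping the accumulated Fano and noise-entropy terms negligible after normalization --- is where the work lies; the remaining degrees-of-freedom bookkeeping is routine.
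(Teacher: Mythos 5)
Your two-part architecture (a cut around all $K$ users for the $K/\min(M,K)$ term, and a genie-aided cut parameterized by $l$ for the other term) mirrors the paper's Appendix~\ref{Lb_Off_Proof}, and the $K/\min(M,K)$ part is fine. The real issue is exactly the step you flag, and your proposed repair does not close it. The paper obtains the term $(M-l)(K-l)\mu L$ inside a \emph{single-demand} bound: it reveals the $l$ user outputs $Y^T_{[1:l]}$ (not the $l$ EN inputs), the caches $S_{[1:(M-l)]}$ and the fronthaul outputs $V^T_{[1:(M-l)]}$; because $F_{[1:l]}$ is decodable from $Y^T_{[1:l]}$, the cache entropies can be conditioned on $F_{[1:l]\cup[K+1:N]}$ and each bounded by $(K-l)\mu L$ via \cite[Lemma 5]{Avik2}, while the full $KL$ stays on the left-hand side. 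Your cut reveals $X_{\mathcal{A}}$, $U$ and the noises, from which no file is decodable without already invoking the caches, so this conditioning route is unavailable to you and you are forced into amortization. (Your power-based bounds $h(X_m^{n_E})\le n_E\log(\pi e P)$ and $h(U^{n_F})\le n_F\log(\pi e P^{r})$, and the decodability chain through the edge/caching policy maps, are fine and match the paper's accounting of $l+r$.)

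However, the Maddah-Ali--Niesen amortization you sketch provably yields a strictly weaker constant whenever $l\ge 1$. Count it: with the first $l$ requests fixed and the remaining $K-l$ cycling over disjoint blocks you can run $B=\lfloor (N-l)/(K-l)\rfloor$ demands, decode at most $l+B(K-l)$ distinct files, and pay $B(l+r)T\log P$ for the per-demand signals plus a single $(M-l)\mu NL$ for the caches; normalizing by $B$ gives, even in the divisible case, $\delta \ge (K-l)N\bigl(1-(M-l)\mu\bigr)/\bigl((N-l)(l+r)\bigr)$, which tends to $\bigl(K-l-(M-l)(K-l)\mu\bigr)/(l+r)$ for large $N$ and equals $K\bigl(1-(M-l)\mu\bigr)/(l+r)$ at $N=K$. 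Either way it falls short of the claimed $\bigl(K-(M-l)(K-l)\mu\bigr)/(l+r)$, by $l/(l+r)$ in the first case and by $(M-l)l\mu/(l+r)$ in the second; cycling all $K$ indices only worsens the cache charge to $\mu K$ per EN. So the amortization cannot recover \eqref{LB_OFF} for $l\ge 1$. To reach the stated bound you need the paper's construction: include the $l$ user received signals in the revealed set so that $F_{[1:l]}$ enters the conditioning of the cache terms, and invoke the conditional cache-entropy bound of \cite[Lemma 5]{Avik2} (together with the reconstruction step behind \cite[eq.~64-(a)]{Avik2}) rather than the raw constraint $H(S_{m,t})\le \mu NL$ or a demand-averaging argument.
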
 
\begin{proof}
See Appendix \ref{Lb_Off_Proof}.
\end{proof}
\vspace{-3mm}
\begin{prop}\label{OfflinePipelinedNDT}
\noindent  For $M \times K$ F-RAN with  $N \geq K $ files, the achievable offline NDT satisfies 
\begin{align}
\hspace{-2.5mm} \delta^{pl}_{\emph{off,ach}}(\mu,r) =\delta_{\emph{off}}(\mu,r) \triangleq  \frac{K(1-\mu M)}{r},  \label{approach1}
\end{align}
for $\mu \in [0,\mu_1= K(1-r/\min(M,K))^+/(KM+r(\min(M,K)-1)) ]$, and
\begin{align}
\hspace{-2.5mm}\delta^{pl}_{\emph{off,ach}}(\mu,r) \hspace{-1mm}=\hspace{-1mm}   \big (\delta_{\emph{E,Coop}}\hspace{-.5mm}-\hspace{-.5mm}\delta_{\emph{off}}(\mu_1,r)  \hspace{-.5mm}\big )\hspace{-1mm} \Bigg (\hspace{-1mm} \frac{\mu-\mu_1}{\mu_2-\mu_1}\hspace{-1mm}\Bigg ) 
\hspace{-1.5mm}+\hspace{-.5mm} \delta_{\emph{off}}(\mu_1,r),  \label{approach2}
\end{align} 
for $\mu \in [\mu_1,\mu_2= \big ( 1-r/\min(M,K) \big ) ^+]$, and
\begin{align}
\delta^{pl}_{\emph{off,ach}}(\mu,r) =  \delta_{\emph{E,Coop}},  \label{approach3}
\end{align} 
for $\mu \in [\mu_2,1]$, and 
\begin{align}
\delta^{pl}_{\emph{off,ach}}(\mu,r) < 2 \delta^{pl}_{\emph{off,lb}},   \label{minoffline2} 
\end{align} 
with $\delta_{\emph{E,Coop}}$ and $\delta^{pl}_{\emph{off,lb}}$ defined in  \eqref{coop_e}  and Proposition \ref{Lb_Off_Lem}.
\end{prop}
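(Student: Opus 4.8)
The plan is to prove the three achievability formulas \eqref{approach1}--\eqref{approach3} by constructing an explicit scheme through memory-sharing among the baseline delivery modes of Section~\ref{OFF}, and then to obtain the multiplicative gap \eqref{minoffline2} by comparing that scheme, region by region, against the lower bound $\delta^{pl}_{\text{off,lb}}$ of Proposition~\ref{Lb_Off_Lem}. Throughout, it suffices to treat the worst-case demand in which the $K\le N$ requested files are distinct and to invoke the pipelining rule \eqref{def_pipe_off}, so that the pipelined NDT of any memory-shared scheme equals the maximum of its (memory-shared) edge-NDT and fronthaul-NDT.

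For $\mu\in[0,\mu_1]$ I would split every file into a fraction $1-M\mu$ served in pure C-RAN mode (no caching, the cloud multicasts the $K$ requested fractions, ENs use ZF) and a fraction $M\mu$ served in EN-coordination mode (each EN caches $1/M$ of this fraction of every file, i.e.\ exactly $\mu NL$ bits at each EN, and delivery uses interference alignment with no fronthaul). By \eqref{coor_e} and \eqref{cran_e} the resulting edge-NDT is $(1-M\mu)\,\delta_{\text{E,Coop}}+\mu(M+K-1)$ and the fronthaul-NDT is $K(1-M\mu)/r$. A direct calculation, in which the key simplification is the identity $(\min(M,K)-M)(\min(M,K)-K)=0$, shows these two quantities coincide precisely at $\mu=\mu_1$ and that the fronthaul-NDT is the larger of the two for $\mu\le\mu_1$; this gives \eqref{approach1}. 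Note in particular that at $\mu=\mu_1$ the scheme is \emph{balanced}, its edge- and fronthaul-NDTs sharing the common value $\delta_{\text{off}}(\mu_1,r)$.

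For $\mu\in[\mu_2,1]$ I would instead use EN cooperation with fractional caching: store the \emph{same} $\mu$-fraction of each file at every EN, have the cloud multicast the remaining $(1-\mu)$-fraction of the $K$ requested files, and deliver by joint ZF as in \eqref{coop_e}. This has edge-NDT $\delta_{\text{E,Coop}}=K/\min(M,K)$ and fronthaul-NDT $K(1-\mu)/r$, and for $\mu\ge\mu_2=(1-r/\min(M,K))^+$ the edge term dominates, which is \eqref{approach3}; at $\mu=\mu_2$ this scheme is again balanced, with common NDT $\delta_{\text{E,Coop}}$. For the intermediate range $\mu\in[\mu_1,\mu_2]$ I would memory-share the $\mu_1$-scheme of the previous paragraph with this $\mu_2$-scheme, putting weight $(\mu_2-\mu)/(\mu_2-\mu_1)$ on the former so that the cache budget $\mu$ is met with equality; since both components are balanced, so is the combination, and its pipelined NDT is exactly the affine interpolation between $\delta_{\text{off}}(\mu_1,r)$ and $\delta_{\text{E,Coop}}$ written in \eqref{approach2}. (When $r\ge\min(M,K)$ one has $\mu_1=\mu_2=0$, the first two intervals degenerate, and only \eqref{approach3} is active, consistently.)

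Finally, for \eqref{minoffline2} I would compare $\delta^{pl}_{\text{off,ach}}$ with $\delta^{pl}_{\text{off,lb}}$ in each of the three ranges, using that $\delta^{pl}_{\text{off,lb}}$ is at least the fronthaul cut-set term $K(1-M\mu)/r$ (the $l=0$ term of \eqref{LB_OFF}) and at least $K/\min(M,K)$. For $\mu\in[0,\mu_1]$ the first of these already gives $\delta^{pl}_{\text{off,lb}}\ge K(1-M\mu)/r=\delta^{pl}_{\text{off,ach}}(\mu,r)$, so the ratio is below $1$; for $\mu\in[\mu_2,1]$ the second gives $\delta^{pl}_{\text{off,lb}}\ge\delta_{\text{E,Coop}}=\delta^{pl}_{\text{off,ach}}(\mu,r)$, again a ratio below $1$. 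For $\mu\in[\mu_1,\mu_2]$, a nonempty interval only when $r<\min(M,K)$ (in which case $\delta_{\text{off}}(\mu_1,r)\ge\delta_{\text{E,Coop}}$ and the interpolant in \eqref{approach2} is nonincreasing and hence at most $\delta_{\text{off}}(\mu_1,r)$), it remains to verify $\delta_{\text{off}}(\mu_1,r)<2K/\min(M,K)\le 2\,\delta^{pl}_{\text{off,lb}}$; substituting the expression for $\mu_1$, this reduces to $\min(M,K)\big(\min(M,K)-1\big)-MK<2r\big(\min(M,K)-1\big)$, which holds because $\min(M,K)^2\le MK$. The step I expect to be most delicate is the bookkeeping in the first scheme, namely confirming that the edge/fronthaul crossover lands exactly at the stated $\mu_1$ and that the memory-shared cache constraint is tight, together with keeping the degenerate regime $r\ge\min(M,K)$ consistent with the non-degenerate one.
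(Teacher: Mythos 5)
Your proposal is correct and follows essentially the same route as the paper: per-block memory-sharing among EN-coordination, C-RAN, and EN-cooperation (your explicit crossover calculation at $\mu_1$ and the balanced endpoints are exactly the paper's choices $\alpha=\mu M$ and $\alpha=\mu$ in its time-sharing formula), followed by the same region-by-region comparison with the $l=0$ term and the $K/\min(M,K)$ term of the lower bound. The only cosmetic slip is calling the achievable-to-lower-bound ratio ``below $1$'' in the outer regions --- since the achievable NDT can never be smaller than the lower bound, the correct conclusion there is equality, which is what the paper states and which still yields \eqref{minoffline2}.
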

\begin{proof}
Using per-block time sharing \cite{Avik2}, denoting achievable edge/fronthaul NDT of $\alpha$-fraction of file by $\delta_{E,1}$/$\delta_{F,1}$ and edge/fronthaul NDT of the remaining $(1-\alpha)$-fraction of file by $\delta_{E,2}$/$\delta_{F,2}$, time sharing achieves \eqref{def_pipe_off}
\begin{align}
\hspace{-2mm}\delta^{pl}_{\text{off,ach}} \hspace{-.5mm}=\hspace{-.5mm}  \max \Big \{\hspace{-.5mm} \alpha \delta_{F,1} \hspace{-.5mm}+\hspace{-.5mm} (1\hspace{-.5mm}-\hspace{-.5mm}\alpha) \delta_{F,2}, \alpha \delta_{E,1} \hspace{-.5mm}+\hspace{-.5mm} (1\hspace{-.5mm}-\hspace{-.5mm}\alpha) \delta_{E,2}\hspace{-.5mm} \Big \}.  \label{per_block_TS}
\end{align} 
%$\delta_{E,1}\hspace{-.5mm}=\hspace{-.5mm}\delta_{\text{E,Coor}}$, $\delta_{F,1}\hspace{-.5mm}=\hspace{-.5mm}\delta_{\text{F,Coor}}$, $\delta_{E,2}\hspace{-.5mm}=\hspace{-.5mm}\delta_{\text{E,C-RAN}}$ and $\delta_{F,2}\hspace{-.5mm}=\hspace{-.5mm}\delta_{\text{F,C-RAN}}$
\eqref{approach1} is obtained by plugging $\delta_{E,1}\hspace{-.5mm}=\hspace{-.5mm}\delta_{\text{E,Coor}}$, $\delta_{F,1}\hspace{-.5mm}=\hspace{-.5mm}\delta_{\text{F,Coor}}$, $\delta_{E,2}\hspace{-.5mm}=\hspace{-.5mm}\delta_{\text{E,C-RAN}}$ and $\delta_{F,2}\hspace{-.5mm}=\hspace{-.5mm}\delta_{\text{F,C-RAN}}$ into \eqref{per_block_TS} and setting $\alpha \hspace{-.5mm}=\hspace{-.5mm} \mu M$. Achievability of \eqref{approach2} follows from time-sharing between EN-coordination for $(\mu-\mu_1)/(\mu_2-\mu_1)$-fraction of each file and C-RAN transmission for the remaining  part. \eqref{approach3} is obtained  by plugging $\delta_{E,1}=\delta_{\text{E,Coop}}$, $\delta_{F,1}=\delta_{\text{F,Coop}}$, $\delta_{E,2}=\delta_{\text{E,C-RAN}}$ and $\delta_{F,2}=\delta_{\text{F,C-RAN}}$ into \eqref{per_block_TS} and setting $\alpha = \mu$. 
To prove \eqref{minoffline2}, setting $l=0$ in \eqref{LB_OFF} and comparing the result  either with \eqref{approach1}  for $\mu \in [0,\mu_1]$ or  with \eqref{approach3}  for $\mu \in [\mu_2,1]$ reveals that
\begin{equation}\label{Bnd1}
\delta^{pl}_{\text{off,ach}}(\mu,r)=\delta^{pl}_{\text{off,lb}}.
\end{equation}
For $\mu \in [\mu_1,\mu_2]$, we have 
\begin{align}
\hspace{-2mm}\frac{\delta^{pl}_{\text{off,ach}}(\mu,r)}{\delta^{pl}_{\text{off,lb}}}  \overset{(a)}  \leq  \Big ( \frac{K(1-\mu_1M)}{r} \Big )  \frac{\min \{ M,K\}}{K}  < 2,    \label{up2}
\end{align} 
where $(a)$ is obtained by using \eqref{approach1}, upper bounding the NDT  by setting $\mu=\mu_1 $ and using lower bound on the minimum NDT in Proposition \ref{Lb_Off_Lem}. Using \eqref{Bnd1}-\eqref{up2} results in \eqref{minoffline2}.  
\end{proof}
\begin{figure}[t]
%\vspace{-5mm}
\centering
\includegraphics[width=.4\textwidth]{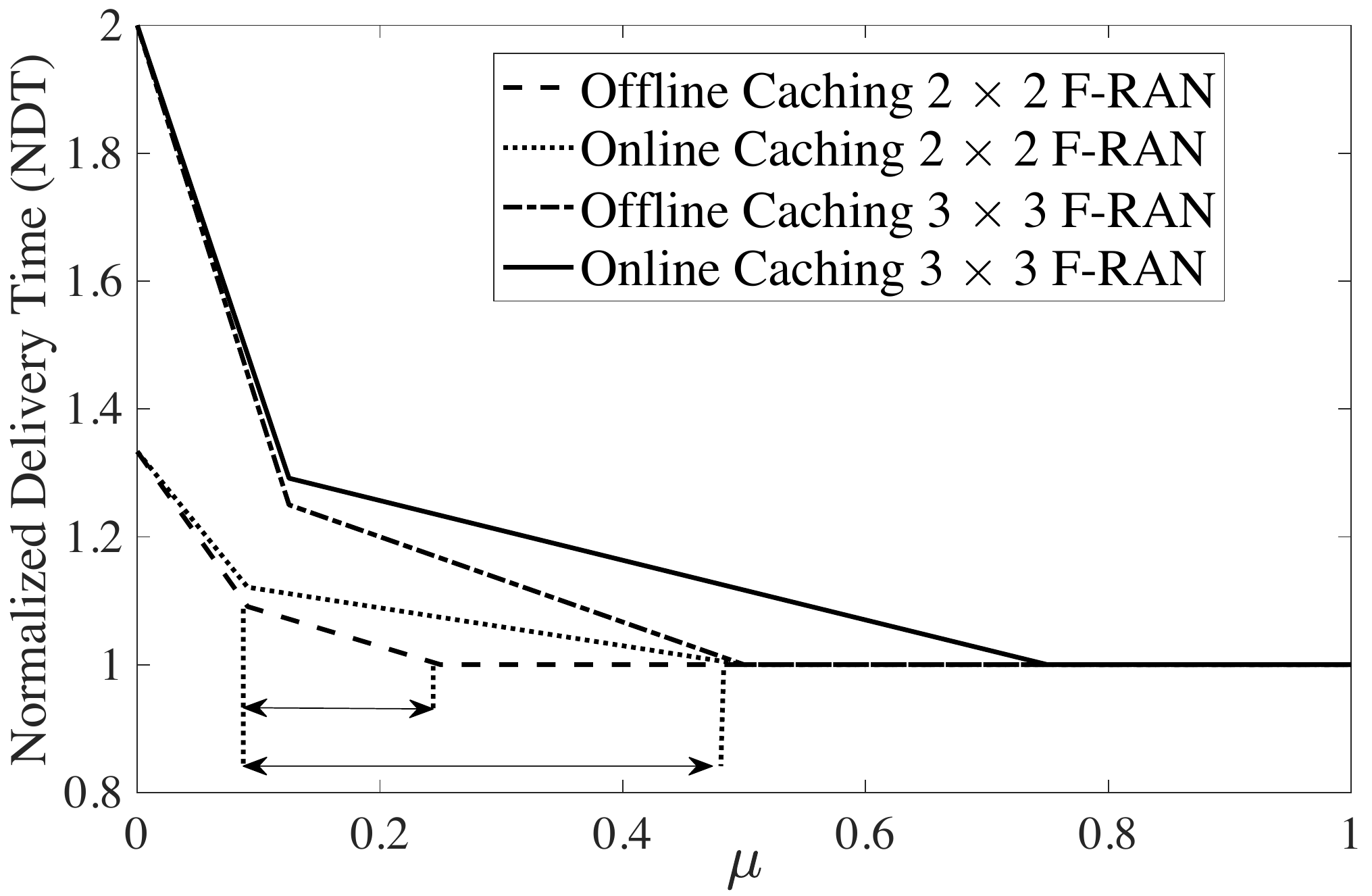}
%\vspace{-12mm}
\caption{ Normalized delivery time (NDT) as a function of fractional cache size $\mu$ for offline and online caching (with $p=1/2$) and power scaling $r=3/2$.}
\label{ref_off_on}
\vspace{-6mm}
\end{figure}
In Fig. \ref{ref_off_on}, the NDT of two offline F-RANs with $K=M=2$ and $K=M=3$ are shown. It is observed that by increasing $M$ and $K$, the \textit{intermediate cache regime} or equivalently $\mu_2-\mu_1$ will be increased. Also, it is required to have a larger $\mu_2$ to achieve \textit{full caching} feature. In case of full caching, for $\mu_2 \leq \mu \leq 1$, the NDT is the minimum possible obtained by $\mu=1$.
%\begin{lemma}\label{MinOfflinePipelinedNDT}
%\noindent  For an $M \times K$ F-RAN with  $N \geq K $ files and $0 < r < \min(M,K)$, the minimum NDT under offline caching with pipelined fronthaul-edge transmission satisfies 
%\begin{align}
%\delta^{pl}_{\emph{off,ach}}(\mu,r)=\delta^{pl^*}_{\emph{off}}(\mu,r),~ \text{for}~ \mu \in [0,\mu_1] \cup  [\mu_2,1] \label{minoffline1}
%\end{align}
%and
%\begin{align}
%\hspace{-13mm}\delta^{pl}_{\emph{off,ach}}(\mu,r) < 2 \delta^{pl^*}_{\emph{off}}(\mu,r) , ~ \text{for}~\mu \in [\mu_1,\mu_2],   \label{minoffline2} 
%\end{align} 
%with the $\delta^{pl}_{\emph{off,ach}}(\mu,r)$, $\mu_1$ and $\mu_2$ are given in Proposition \ref{OfflinePipelinedNDT}.
%\end{lemma}
%\begin{proof}
%See Appendix \ref{Ratio_Offline_Ach_Lower}.
%\end{proof}
\section{Online caching}\label{sec_para}
In proactive online caching, upon the arrival of a newly popular file in the set, a $\mu$-fraction of it will be sent on the fronthaul link regardless of being requested by users or not.   In \cite{JSACAzimi}, proactive online caching is studied under dedicated fronthaul links while in this letter a wireless multicast fronthaul connection is considered.   
%In the following, first a lower bound on the minimum long-term NDT of online edge caching is obtained, then long-term NDT of proactive online caching is derived as an achievable online caching and finally the relation between long-term NDT of online and offline caching is characterized.      
In what follows, a lower bound on the minimum long-term NDT is derived.
\begin{prop}\label{Lb_On_Coro}
\noindent  For $M \times K$ F-RAN with $N \geq K $, the minimum long-term NDT for online caching is 
\begin{align}
\bar{\delta}^{pl^*}_{\emph{on}}(\mu,r)  \geq \frac{(1-Kp/N)}{2} \delta^{pl^*}_{\emph{off}} (\mu,r)+ (Kp/N) \frac{M\mu}{r},   \label{PoC10}
\end{align}
with $\delta^{pl^*}_{\emph{off}} (\mu,r)$ defined in Proposition \ref{Lb_Off_Lem}.
\end{prop}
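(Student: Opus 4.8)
The plan is to establish the lower bound in \eqref{PoC10} by a genie-aided cut-set style argument that separates the contribution of \emph{newly arrived} files from that of the \emph{stable} popular set, and then to average over the two-state Markov popularity model. First I would condition on the realization of the arrival process over a long horizon of $T$ slots. In a slot $t$ where an arrival occurs (probability $p$), a fresh file enters $\mathcal{F}_t$; since caching is proactive, only a $\mu$-fraction of this file can have been placed at each EN by the time it is (possibly) requested, so if it is among the $d_t$ demands — which happens with probability $K/N$ for a uniformly chosen request without replacement — at least a $(1-\mu)$-fraction of its $L$ bits must be conveyed through the wireless fronthaul during the current slot. Bounding the multicast fronthaul rate in the high-SNR regime by $r\log(P)$ (the power-scaling constraint $T_{F,t}^{-1}\sum|U_t|^2 \le P^r$) yields a fronthaul-transmission-time contribution of order $(1-\mu)L/(r\log P)$; together with the residual edge delivery this is where the $(Kp/N)\,M\mu/r$ term is meant to come from, after accounting for the $M$ ENs and rearranging — I would need to reconcile the bookkeeping so that the genie argument produces exactly $M\mu/r$ rather than $(1-\mu)/r$, presumably by charging the fronthaul with the portion of the $\mu$-fraction that still must be (re)delivered to all $M$ caches under the cache-capacity bound $H(S_{m,t})\le \mu NL$.

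Next I would handle the complementary event. Conditioned on \emph{no} arrival in slot $t$ (probability $1-p$), or on an arrival that is not requested, the demanded files have been popular since at least the previous slot, so the system is effectively operating as in the offline setting, and the per-slot NDT is lower bounded by $\delta^{pl^*}_{\text{off}}(\mu,r)$ from Proposition~\ref{Lb_Off_Lem}. The factor $1/2$ multiplying $\delta^{pl^*}_{\text{off}}$ is the characteristic loss incurred when one cannot freely pre-place content: the standard technique (as in \cite{JSACAzimi}) is to consider two consecutive slots $t-1,t$, observe that the cache state $S_{m,t-1}$ was itself built from fronthaul receptions constrained by the same power budget, and use a cut that spans both slots; summing the two slot-NDTs and invoking the offline bound on the pair gives $2\delta^{pl^*}_{\text{off}}$ for two slots, i.e. $\tfrac12\delta^{pl^*}_{\text{off}}$ per slot. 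I would make this precise by writing the entropy of the demanded files as bounded by the mutual information delivered over the edge channels in slots $t-1$ and $t$ plus the cache content at slot $t-2$, then applying the high-SNR single-letter bounds exactly as in the proof of Proposition~\ref{Lb_Off_Lem}.

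Finally I would assemble the pieces: take $\textrm{E}[T^{pl}_t]$, split according to whether slot $t$ sees a requested new arrival, apply the offline bound on the complementary event and the fronthaul-based bound on the arrival event, divide by $L/\log(P)$, let $L,P\to\infty$ to pass to NDTs, and then average $\tfrac1T\sum_{t=1}^T$ and take $\limsup$; by the law of large numbers for the i.i.d. arrival indicators, the fraction of arrival-and-requested slots converges to $Kp/N$, giving \eqref{PoC10}. The main obstacle I anticipate is the coupling across slots introduced by the caching policy: because $S_{m,t}$ depends on $V_{m,t}$ and $S_{m,t-1}$, the per-slot lower bounds are not independent, and one must choose the genie/cut carefully — telescoping over pairs of slots and discarding the cross terms in the right direction — so that the offline bound may be invoked slot-by-slot without double-counting the fronthaul resource; this is exactly the step that forces the $1/2$ and that must be argued at the level of entropies rather than informally. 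A secondary technical point is justifying that the ``without replacement'' uniform demand model still yields probability $K/N$ that a given fresh file is requested, and that this event is (asymptotically) independent of the channel realizations so the expectations factor as claimed.
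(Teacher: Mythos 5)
Your overall skeleton matches the paper's first step: bound the per-slot NDT by splitting according to whether a newly arrived file is requested (weight $Kp/N$) or not (weight $1-Kp/N$), apply an offline-type bound to the latter event, and average over slots. However, there are two genuine gaps. First, you do not actually produce the term $M\mu/r$, and you say so yourself. The paper obtains it not from your heuristic that ``a $(1-\mu)$-fraction of the new file must cross the fronthaul'' (which would give a quantity of order $(1-\mu)/r$, a different bound that neither dominates nor is dominated by $M\mu/r$), but by repeating the cut-set argument of \eqref{PoC1} with the cache side-information term reduced from $(M-l)(K-l)\mu L$ to $(M-l)(K-l-1)\mu L$, because the newly requested file cannot reside in any cache; rearranging $K-(M-l)(K-l-1)\mu = K\big(1-\mu(M-l)\big)+(M-l)(l+1)\mu$, dropping the nonnegative first term (valid since $\mu\le 1/(M-l)$ in the relevant regime) and setting $l=0$ yields exactly $M\mu/r$. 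Without this entropy-counting step your argument cannot reach the stated bound.

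Second, your mechanism for the factor $1/2$ is not the paper's and, as written, does not work: you propose a cut spanning two consecutive slots and claim it gives ``$2\delta^{pl^*}_{\text{off}}$ for two slots, i.e. $\tfrac12\delta^{pl^*}_{\text{off}}$ per slot,'' which is arithmetically inconsistent ($2\delta$ over two slots is $\delta$ per slot), and the cross-slot telescoping you would need is left entirely unspecified. The paper needs no multi-slot cut at all: the genie-aided per-slot bound for the no-new-request event is simply the offline lower bound $\delta^{pl}_{\text{off,lb}}$ of Proposition \ref{Lb_Off_Lem}, and the $1/2$ arises only when converting $\delta^{pl}_{\text{off,lb}}$ into $\delta^{pl^*}_{\text{off}}$ via the constant-gap result \eqref{minoffline2}, i.e. $\delta^{pl}_{\text{off,lb}}>\tfrac12\,\delta^{pl}_{\text{off,ach}}\ge\tfrac12\,\delta^{pl^*}_{\text{off}}$. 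So the missing ingredients are (i) the modified cut-set bound for the uncached new file and (ii) the use of the offline multiplicative gap of $2$; your probability bookkeeping ($Kp/N$) and the final averaging over slots are consistent with the paper.
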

\begin{proof}
See Appendix \ref{Lb_On_Proof_Coro}.
\end{proof}
Proposition \ref{Lb_On_Coro} reveals that the lower bound scales inversely  with $r$, the power scaling of fronthaul transmission.
The next proposition provides the achievable long-term NDT. 
\begin{prop}\label{p_proact}
For $M \times K$ F-RAN with $N \geq K $ and pipelined transmission, proactive online caching achieves 
\begin{align}
\bar{\delta}^{pl}_{\emph{on,proact}}(\mu,r)=\delta_{\emph{on}}(\mu,r) \triangleq \frac{K(1-\mu M)}{r} + \frac{p\mu}{r},  \label{OnPL1}
\end{align}
for $\mu \in [0,\mu_1]$ with $\mu_1$ given in Proposition \ref{OfflinePipelinedNDT} and
\vspace{-2mm}
\begin{align}
\hspace{-1mm}\bar{\delta}^{pl}_{\emph{on,proact}}(\mu,r) \hspace{-1mm} = \hspace{-1mm} \Big (\hspace{-1mm}\delta_{\emph{E,Coop}}-\delta_{\emph{on}}(\mu_1,r)  \hspace{-1mm}\Big ) \Bigg ( \hspace{-1mm}\frac{\mu-\mu_1}{\mu'_2-\mu_1} \hspace{-1mm}\Bigg )\hspace{-.5mm} + \hspace{-.5mm} \delta_{\emph{on}}(\mu_1,r) ,  \label{OnPL2}
\end{align} 
\vspace{-2mm}
for $\mu \in [\mu_1,\mu'_2= \Big (K( 1-r/\min(M,K)))/(K-1) \Big ) ^+]$, and
\begin{align}
\bar{\delta}^{pl}_{\emph{on,proact}}(\mu,r) = \delta_{\emph{E,Coop}}     \label{OnPL3}
\end{align} 
for $\mu \in [\mu'_2,1]$.
\end{prop}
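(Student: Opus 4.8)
The plan is to establish the three regimes of Proposition \ref{p_proact} by combining the offline achievability scheme of Proposition \ref{OfflinePipelinedNDT} with a fronthaul overhead that accounts for the proactive prefetching of newly popular files. First I would set up the per-slot accounting: in a given slot $t$, with probability $p$ a new file enters $\mathcal{F}_t$, and the proactive policy pushes a $\mu$-fraction of that file ($\mu L$ bits) on the multicast fronthaul regardless of demand, contributing an expected fronthaul load of $p\mu L$ bits; multicasting these bits costs time $p\mu L/(r\log P)$, i.e.\ a fronthaul-NDT increment of $p\mu/r$. The delivery of the actual requests $d_t$ from the (already cached, up to $\mu N L$ bits) edge contents plus the remaining $(1-\mu M)$-fraction via C-RAN is handled exactly as in the offline scheme. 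Because the transmission is pipelined, the per-slot NDT is the maximum of the total fronthaul-NDT and the edge-NDT; in the low-cache regime $\mu\in[0,\mu_1]$ the fronthaul term dominates, so adding the proactive overhead $p\mu/r$ to the offline fronthaul-NDT $K(1-\mu M)/r$ gives exactly $\delta_{\text{on}}(\mu,r)$ of \eqref{OnPL1}, and averaging over $t$ (the Markov chain spends a $p$-fraction of slots with an arrival, $1-p$ without, but the prefetch cost is linear so the long-term average is just the per-slot expectation) yields the stated $\bar\delta^{pl}_{\text{on,proact}}$.

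Next I would handle the intermediate regime $\mu\in[\mu_1,\mu'_2]$ by the same per-block time-sharing argument as in Proposition \ref{OfflinePipelinedNDT}: split each file into a fraction handled by EN-coordination-type delivery (zero fronthaul for that part, but it must have been prefetched, incurring the proactive cost) and a fraction handled by C-RAN transmission. The key change from the offline case is that the crossover point moves from $\mu_2$ to $\mu'_2 = \big(K(1-r/\min(M,K))/(K-1)\big)^+$, because the proactive fronthaul cost $p\mu/r$ (or, more precisely, the prefetching of the time-shared fraction) eats into the fronthaul budget and the edge and fronthaul NDTs now balance at a larger $\mu$. I would write the time-shared NDT as a convex combination $\alpha$-weighted between the coordination point and the C-RAN point, impose $\delta_F = \delta_E$ to pin down where the linear interpolation \eqref{OnPL2} begins and ends, and check that at $\mu=\mu_1$ it reduces to $\delta_{\text{on}}(\mu_1,r)$ and at $\mu=\mu'_2$ it reaches $\delta_{\text{E,Coop}}$, giving the straight-line formula \eqref{OnPL2}. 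For $\mu\in[\mu'_2,1]$ the cache is large enough that full EN-cooperation (joint ZF with $\mathcal{F}$ effectively available at the ENs) is feasible with the fronthaul fully masked by the edge transmission, so the NDT saturates at $\delta_{\text{E,Coop}} = K/\min(M,K)$, establishing \eqref{OnPL3}.

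The main obstacle I expect is the precise justification of the crossover point $\mu'_2$ and, relatedly, making the proactive-caching bookkeeping rigorous across the Markov dynamics. Unlike offline caching, here a file must already reside (at least in $\mu$-fraction) in the ENs' caches at the moment it is requested; one has to argue that the steady-state caching policy — prefetch each newly arrived file's $\mu$-fraction as soon as it enters $\mathcal{F}_t$, while evicting the $\mu$-fraction of the file it replaced — keeps every currently popular file's $\mu$-fraction cached, so that during slot $t$ the edge delivery genuinely has the claimed coordination/cooperation gain available. This requires checking that the cache-capacity constraint $H(S_{m,t})\le \mu N L$ is respected throughout (the set has exactly $N$ files, each contributing $\mu L$ bits per EN, so it is tight) and that the expected per-slot fronthaul cost is exactly the prefetch term plus the C-RAN delivery term with no hidden cross-slot coupling; because arrivals are i.i.d.\ Bernoulli($p$) and the prefetch cost is additive, the $\limsup$ in \eqref{avNDT_P} collapses to the per-slot expectation, which is the crux that makes the long-term NDT match the clean expressions above.
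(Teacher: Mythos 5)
Your overall strategy is the same as the paper's: reuse the offline per-block time-sharing scheme of Proposition~\ref{OfflinePipelinedNDT} and charge an extra $\mu L$ multicast-fronthaul bits whenever a new file enters the popular set, then average over the Bernoulli($p$) arrivals. Your bookkeeping of the cache dynamics (evict the replaced file's $\mu$-fraction, capacity stays tight, the long-term average in \eqref{avNDT_P} reduces to a per-slot expectation) is sound and, if anything, more explicit than the paper's. The extreme regimes \eqref{OnPL1} and \eqref{OnPL3} come out correctly under your accounting.

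The genuine gap is in the intermediate regime, and it stems from where you put the expectation. You add the \emph{expected} prefetch overhead $p\mu/r$ to the fronthaul-NDT and then take the pipelined max; the correct per-slot accounting (the paper's \eqref{per_block_TS_On}) is the expectation over the arrival event of the max, $p\max\{\delta_F+\mu/r,\,\delta_E\}+(1-p)\max\{\delta_F,\,\delta_E\}$, because in an arrival slot the fronthaul must carry the extra $\mu L$ bits \emph{within that slot}. The two agree when one term dominates in both branches (hence \eqref{OnPL1} and \eqref{OnPL3} survive), but not in between: balancing ``fronthaul $+\,p\mu/r=$ edge'' for the cooperation/C-RAN split gives a crossover at $K\big(1-r/\min(M,K)\big)/(K-p)$, which is $p$-dependent and strictly smaller than $\mu'_2=K\big(1-r/\min(M,K)\big)/(K-1)$; the stated $\mu'_2$ is $p$-independent precisely because the balance is imposed inside an arrival slot with the full $\mu/r$. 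Followed literally, your recipe yields a curve lying below \eqref{OnPL2} on part of $[\mu_1,\mu'_2]$ and so does not establish the claimed equalities. A second, related slip: you describe the regime-2 scheme as EN-coordination plus C-RAN, but coordination's edge NDT is $(M+K-1)/M$, so that combination never descends to $\delta_{\text{E,Coop}}=K/\min(M,K)$; reaching $\delta_{\text{E,Coop}}$ at $\mu'_2$ requires the cooperation-plus-C-RAN split with $\alpha=\mu$ (as in \eqref{approach3}), and \eqref{OnPL2} is the time-sharing line between the $\mu_1$-scheme and that cooperative scheme. With these two corrections your argument coincides with the paper's proof.
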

\begin{proof}
\begin{figure}[t]
%\vspace{-5mm}
\centering
\includegraphics[width=.44\textwidth]{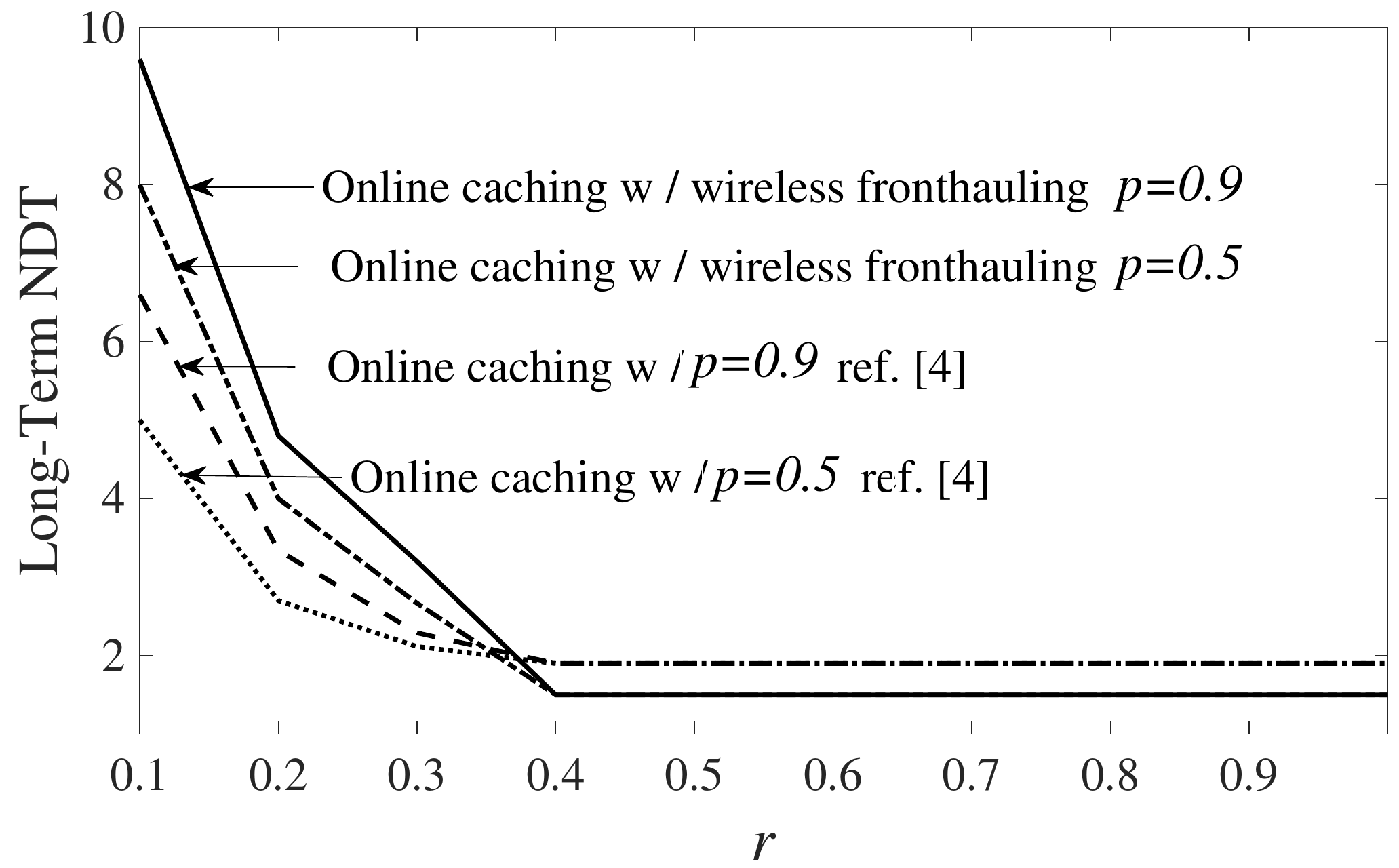}
\vspace{-2mm}
\caption{ Long-term NDT as a function of power scaling of fronthaul transmission or fronthaul rate $r$ for online caching with $p=0.5$, $p=0.9$, $\mu=0.4$, $M=2$, $K=3$.}
\label{ref_on_r}
\end{figure}
The arrival of a newly popular file in the set, which occurs with probability of $p$, requires proactive transmission on the wireless fronthaul channel and hence increasing the fronthaul-NDT of offline scheme by the term $\mu/r$. Instead, with  probability of $(1-p)$, the popular set will remain the same as previous slot  and the NDT  is obtained similar to  Proposition \ref{OfflinePipelinedNDT}. Using per-block time sharing, we have
\begin{align}
\bar{\delta}^{pl}_{\text{on,proact}}(\mu,r) & =  p \max \Big \{ \alpha \delta_{F,1} + (1-\alpha) \delta_{F,2}+ \frac{\mu}{r}, \nonumber \\
& \alpha \delta_{E,1} + (1-\alpha) \delta_{E,2} \Big \}+ (1-p) \delta^{pl}_{\text{off,ach}},
\label{per_block_TS_On}
\end{align}
%\begin{align}
%\bar{\delta}^{pl}_{\text{on,proact}}(\mu,r) & =  p \max \Big \{ \alpha \delta_{F,1} + (1-\alpha) \delta_{F,2}+ \frac{\mu}{r}, \nonumber \\
%& ~~~~~~\alpha \delta_{E,1} + (1-\alpha) \delta_{E,2} \Big \} \nonumber \\   
%&+ (1-p) \max \Big \{ \alpha \delta_{F,1} + (1-\alpha) \delta_{F,2}, \nonumber \\
%&~~~~~~ \alpha \delta_{E,1} + (1-\alpha) \delta_{E,2} \Big \}.
%\label{per_block_TS_On}
%\end{align} 
%\begin{align}
%&\bar{\delta}^{pl}_{\text{on,proact}}(\mu,r)  =  (1-p) \max \Big \{ \alpha \delta_{F,1} + (1-\alpha) \delta_{F,2},\alpha \delta_{E,1} + (1-\alpha) \delta_{E,2} \Big \} \nonumber \\  &+ p \max \Big \{ \alpha \delta_{F,1} + (1-\alpha) \delta_{F,2}+ \frac{\mu}{r},\alpha \delta_{E,1} + (1-\alpha) \delta_{E,2} \Big \}.
%\label{per_block_TS_On}
%\end{align} 
with $ \delta^{pl}_{\text{off,ach}}$ given in \eqref{per_block_TS}. To complete the proof, $\delta_{E,1}$, $\delta_{F,1}$, $\delta_{E,2}$ and $\delta_{F,2}$ are plugged into \eqref{per_block_TS_On} in the same way as described in the proof of Proposition \ref{OfflinePipelinedNDT}.
\end{proof}
In Fig. \ref{ref_off_on}, time-varying content popularity with probability $p=0.5$ results in higher NDT for online caching comparing to offline caching. Furthermore, a larger cache storage is required to have full caching. 
The next proposition relates the long-term NDT of online caching with minimum NDT of offline caching.  
\begin{prop}\label{final_uppro1}
\noindent For $M \times K$ F-RAN with $N \geq K $, $0 < r < \min(M,K)$ and pipelined transmission, the minimum long-term NDT of online caching $\bar{\delta}_{\emph{on}}^{pl^*}(\mu,r)$ satisfies the condition
\begin{equation}\label{longpro1}
\bar{\delta}_{\emph{on}}^{pl^*}(\mu,r) = 2\delta^{pl^*}_{\emph{off}} (\mu,r) +  O \Big (\frac{1}{r} \Big ),
\end{equation}
with $\delta^{pl^*}_{\emph{off}} (\mu,r)$ defined in Proposition \ref{Lb_Off_Lem}.  
\end{prop}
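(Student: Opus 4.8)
The plan is to read \eqref{longpro1} as a matched converse/achievability statement and to obtain the two bounds separately. The converse is essentially already available: Proposition~\ref{Lb_On_Coro} gives $\bar{\delta}^{pl^*}_{\text{on}}(\mu,r)\ge \tfrac{1-Kp/N}{2}\,\delta^{pl^*}_{\text{off}}(\mu,r)+\tfrac{Kp}{N}\tfrac{M\mu}{r}$, and since $N\ge K$ the first term is a fixed positive fraction of $\delta^{pl^*}_{\text{off}}(\mu,r)$ while the second is $O(1/r)$; so the lower side of \eqref{longpro1} holds up to the advertised multiplicative/additive slack. The real work is the upper bound. Since $\bar{\delta}^{pl^*}_{\text{on}}(\mu,r)\le \bar{\delta}^{pl}_{\text{on,proact}}(\mu,r)$ by Proposition~\ref{p_proact}, it suffices to show $\bar{\delta}^{pl}_{\text{on,proact}}(\mu,r)\le 2\,\delta^{pl^*}_{\text{off}}(\mu,r)+O(1/r)$, and I would do this by comparing the proactive online NDT of Proposition~\ref{p_proact} with the offline achievable NDT of Proposition~\ref{OfflinePipelinedNDT} on each of the three cache regimes.

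On the low‑cache regime $\mu\in[0,\mu_1]$ the two achievable NDTs differ by exactly $\delta_{\text{on}}(\mu,r)-\delta_{\text{off}}(\mu,r)=p\mu/r\le p/r$, and by \eqref{Bnd1} the offline achievable NDT coincides there with the offline lower bound, hence with $\delta^{pl^*}_{\text{off}}(\mu,r)$; so $\bar{\delta}^{pl}_{\text{on,proact}}(\mu,r)=\delta^{pl^*}_{\text{off}}(\mu,r)+O(1/r)$. On the full‑caching regime $\mu\in[\mu'_2,1]$ both NDTs equal $\delta_{\text{E,Coop}}$, while $\delta^{pl^*}_{\text{off}}(\mu,r)\ge\delta^{pl}_{\text{off,lb}}\ge\delta_{\text{E,Coop}}$ from the $K/\min(M,K)$ branch of \eqref{LB_OFF}; so the bound holds there with no $O(1/r)$ term. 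Both of these then follow a fortiori with the stated factor $2$.

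The hard part is the intermediate regime $\mu\in[\mu_1,\mu'_2]$. Here $\bar{\delta}^{pl}_{\text{on,proact}}$ linearly interpolates from $\delta_{\text{on}}(\mu_1,r)$ at $\mu_1$ down to $\delta_{\text{E,Coop}}$ at $\mu'_2$, whereas the offline achievable NDT interpolates from $\delta_{\text{off}}(\mu_1,r)$ down to $\delta_{\text{E,Coop}}$ but reaches $\delta_{\text{E,Coop}}$ already at $\mu_2=\tfrac{K-1}{K}\mu'_2<\mu'_2$ and is flat afterwards; since $\mu'_2-\mu_2=\tfrac{1}{K-1}\mu_2$ is not of order $1/r$, the two piecewise‑linear curves are not uniformly $O(1/r)$‑close and a pointwise comparison is unavailable. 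Instead I would use monotonicity of the NDT in $\mu$. First note the online curve is non‑increasing on $[\mu_1,\mu'_2]$: this amounts to $\delta_{\text{on}}(\mu_1,r)\ge\delta_{\text{E,Coop}}$, which holds because $\delta_{\text{off}}(\mu_1,r)$, computed from the value of $\mu_1$ in Proposition~\ref{OfflinePipelinedNDT}, is non‑increasing in $r$ on $(0,\min(M,K))$ and equals $\delta_{\text{E,Coop}}=K/\min(M,K)$ at $r=\min(M,K)$. Hence $\bar{\delta}^{pl}_{\text{on,proact}}(\mu,r)\le\bar{\delta}^{pl}_{\text{on,proact}}(\mu_1,r)=\delta_{\text{off}}(\mu_1,r)+p\mu_1/r$ on the whole interval, while $\delta^{pl^*}_{\text{off}}(\mu,r)\ge\delta_{\text{E,Coop}}$. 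It then remains only to check the elementary inequality $\delta_{\text{off}}(\mu_1,r)=\tfrac{K(1-\mu_1 M)}{r}\le 2\delta_{\text{E,Coop}}$ for $0<r<\min(M,K)$; by the monotonicity just noted $\delta_{\text{off}}(\mu_1,r)$ is maximized as $r\to0$, where it equals $\tfrac{\min(M,K)-1}{M}+\tfrac{K}{\min(M,K)}$, and the claim reduces to $\min(M,K)\big(\min(M,K)-1\big)\le MK$, which holds because $\min(M,K)\le M$ and $\min(M,K)-1\le K$. Together with $p\mu_1/r\le p/r=O(1/r)$ this gives $\bar{\delta}^{pl}_{\text{on,proact}}(\mu,r)\le 2\,\delta^{pl^*}_{\text{off}}(\mu,r)+O(1/r)$ on the intermediate regime too.

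Assembling the three regimes yields the achievable side, and combining it with the converse from Proposition~\ref{Lb_On_Coro} gives \eqref{longpro1}. I expect the intermediate regime to be the only real obstacle: the onset of full caching is genuinely pushed from $\mu_2$ out to $\mu'_2=\tfrac{K}{K-1}\mu_2$ under online operation — an $\Theta(1)$ shift, not $O(1/r)$ — so one must not match the two interpolating curves pointwise but instead bound the online curve above by its largest value $\delta_{\text{on}}(\mu_1,r)$ and the offline optimum below by $\delta_{\text{E,Coop}}$. The hypotheses $0<r<\min(M,K)$ and $N\ge K$ are exactly what make these two crude bounds compatible with the factor $2$ and confine the residual discrepancy — the proactive fronthaul cost $p\mu/r$ — to $O(1/r)$.
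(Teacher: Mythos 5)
Your proposal is correct, and it follows the paper's overall skeleton — the sandwich $\bar{\delta}^{pl^*}_{\text{on}}\ge$ (Proposition \ref{Lb_On_Coro} bound), $\bar{\delta}^{pl^*}_{\text{on}}\le\bar{\delta}^{pl}_{\text{on,proact}}$, and then a regime-by-regime bound of the proactive achievable NDT of Proposition \ref{p_proact} — but your treatment of the intermediate cache regime is genuinely different from the paper's. The paper splits $[\mu_1,\mu'_2]$ into $[\mu_1,\mu_2]$ and $[\mu_2,\mu'_2]$, compares the online curve pointwise against the offline \emph{achievable} curve $\delta^{pl}_{\text{off,ach}}$ (bounding the gap by $p\mu/r$ on the first piece and by $3+2p/r$ on the second via evaluation at $\mu=\mu_2$ and a rather loose maximization ``using $M=2$ and $K=1$''), and then needs the factor-two relation \eqref{minoffline2} to pass from $\delta^{pl}_{\text{off,ach}}$ to $\delta^{pl^*}_{\text{off}}$; this yields the constants $6+4p/r$. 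You instead bound the online interpolation \eqref{OnPL2} by its left endpoint $\delta_{\text{on}}(\mu_1,r)$ (valid since $\delta_{\text{off}}(\mu_1,r)=K[(m-1)+KM/m]/(KM+r(m-1))$, $m=\min(M,K)$, is non-increasing in $r$ and equals $\delta_{\text{E,Coop}}$ at $r=m$), bound the offline optimum below by the $K/\min(M,K)$ branch of \eqref{LB_OFF}, and reduce everything to $m(m-1)\le MK$; on $[0,\mu_1]$ you exploit the exact coincidence \eqref{Bnd1} rather than \eqref{minoffline2}. This buys cleaner and tighter constants, avoids the paper's questionable ``$M=2$, $K=1$'' step and its $(M-1)$ denominator (ill-posed at $M=1$), and correctly identifies why a pointwise comparison on $[\mu_2,\mu'_2]$ cannot work (the shift of the full-caching threshold from $\mu_2$ to $\mu'_2$ is $\Theta(1)$, not $O(1/r)$). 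Note also that, like the paper, you only establish a two-sided order relation (the converse side is $\approx\delta^{pl^*}_{\text{off}}/2$ plus $O(1/r)$, not $2\delta^{pl^*}_{\text{off}}-O(1/r)$), so the literal equality in \eqref{longpro1} is proved only in the loose sense the paper itself intends; your explicit acknowledgment of this matches the paper's level of rigor.
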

\begin{proof}
%See Appendix \ref{fund_r_pro}.
It is sufficient to prove that the following inequalities.
\begin{align}
 \frac{1-\frac{Kp}{N}}{2}\delta^{pl^*}_{\text{off}} (\mu,r) & + \Big ( \frac{Kp}{N} \Big ) \frac{M \mu}{r}  \overset{(a)} \leq  \bar{\delta}_{\text{on}}^{pl^*}(\mu,r) \overset{(b)} \leq  \bar{\delta}_{\text{on,proact}}^{pl}(\mu,r)  \nonumber \\  
 & \overset{(c)} \leq  2 \delta^{pl^*}_{\text{off}} (\mu,r) +  6+\frac{4p}{r}. \label{longpro2}
\end{align}
$(a)$ follows from Proposition \ref{Lb_On_Coro}, $(b)$ holds by definition, $(c)$ is proved  as follows.  

\noindent $(i)$ For $\mu\in [0,\mu_1]$,  using \eqref{approach1} and \eqref{OnPL1}, we have:
\begin{align}
\bar{\delta}^{pl}_{\text{on,proact}}(\mu,r) =\delta^{pl}_{\text{off,ach}}(\mu,r) + \frac{p\mu}{r}, \label{Rel1}
\end{align}

\noindent $(ii)$ For $\mu\in [\mu_1,\mu_2]$, using \eqref{approach2} and \eqref{OnPL2} and then noting the the fact that $\mu'_2 \geq \mu_2$, we have
\begin{align}
\bar{\delta}^{pl}_{\text{on,proact}}(\mu,r)&-\delta^{pl}_{\text{off,ach}}(\mu,r)   \leq  \frac{p\mu}{r} \label{Rel2}
\end{align}  

\noindent $(iii)$ For $\mu\in [\mu_2,\mu'_2]$, the following inequality holds:
%&= \Bigg (\frac{K(1-\mu_1 M)}{r} + \frac{p \mu_1}{r}\Bigg ) \Bigg ( 1-\frac{\mu-\mu_1}{\mu'_2-\mu_1} \Bigg )  \nonumber \\
\begin{align} 
&\bar{\delta}^{pl}_{\text{on,proact}}(\mu,r)-\delta^{pl}_{\text{off,ach}}(\mu,r)  
 \overset{(a)} \leq  \Bigg (\hspace{-1mm}\frac{K(1-\mu_1 M)}{r} + \frac{p \mu_1}{r}\hspace{-1mm}\Bigg ) \times \nonumber \\ 
& \Bigg (\hspace{-1mm} 1\hspace{-.5mm}-\hspace{-.5mm}\frac{\mu_2-\mu_1}{\mu'_2-\mu_1} \hspace{-1mm}\Bigg )\hspace{-1mm}  
 \overset{(b)} \leq \hspace{-1mm} \Bigg (\hspace{-1mm}1\hspace{-.5mm}+\hspace{-.5mm}\frac{K}{M} \hspace{-.5mm}+\hspace{-.5mm} \frac{p}{r}\hspace{-1mm}\Bigg ) \hspace{-1mm}\Bigg (\hspace{-1mm} \frac{M}{K(M-1)} \hspace{-1mm}\Bigg ) \hspace{-1mm} \overset{(c)} \leq \hspace{-1mm} 3 \hspace{-.5mm}+\hspace{-.5mm} \frac{2p }{r}, \label{Rel3}
\end{align}
where $(a)$ is obtained using \eqref{approach3} and \eqref{OnPL2}, dropping the negative term and then setting $\mu=\mu_2$ to maximize the upper bound, $(b)$ is obtained by using the definition of $\mu_1$, $\mu_2$ and $\mu'_2$ in Proposition \ref{OfflinePipelinedNDT} and Proposition \ref{p_proact} and $(c)$ is obtained by maximizing the bound using $M=2$ and $K=1$. 

\noindent $(iv)$ For $\mu\in [\mu'_2,1]$, using \eqref{approach3} and \eqref{OnPL3}, results in:
\begin{align}
\bar{\delta}^{pl}_{\text{on,proact}}(\mu,r)=\delta^{pl}_{\text{off,ach}}(\mu,r)  = \frac{K}{\min (M,K)}. \label{Rel4}
\end{align} 
Using \eqref{Rel1}-\eqref{Rel4} completes the proof of \eqref{longpro2}$-(c)$.
\end{proof}
From Proposition \ref{final_uppro1}, it can be inferred that the time-variability of popular set results in an additional cost on the long-term NDT that increases inversely with respect to the power scaling of fronthaul transmission $r$. Time varying  content popularity make it inevitable to deliver the newly popular files only by using fronthaul resources. In Fig. \ref{ref_on_r}, it is observed that long-term NDT is a decreasing function of $r$. Comparing to dedicated fronthaul links in \cite{JSACAzimi}, wireless multicast fronthauling has higher long-term NDT in low fronthaul regime while after a threshold multicasting with pipelined transmission outperform dedicated approach. Since there is not a closed formula for long-term NDT in \cite{JSACAzimi}, the threshold cannot be derived analytically.
%Proposition \ref{final_uppro1} reveals that the long-term NDT with online caching is proportional to the minimum NDT for offline caching of time-invariant popular set, with an additive gap that increases inversely with respect to the power scaling of fronthaul transmission $r$. 
 %Henceforth, the additive latency gap as a function of $1/r$ is a fundamental consequence of the time-variability of the popular content.  
%\vspace{-5mm}
\section{Conclusions} \label{Conc}
In this letter, the delivery of  time-varying content in a fog network with wireless fronthaul and edge connection is considered. To this end, proactive online caching with pipelined transmission is introduced. Using information-theoretic analysis, it is proved that performance of the system is a function of fronthaul and edge resources. Since the only means of delivering new content is through the wireless multicast fronthaul links, the power scaling of fronthaul transmission is a fundamental limit on the achievable latency. In future work, we will extend the model to the case with imperfect CSI.
\appendices
\section{Proof of Proposition \ref{Lb_Off_Lem}}\label{Lb_Off_Proof}
Assuming $N \hspace{-.5mm} \geq \hspace{-.5mm} K$, let $F_{[1:K]}$ be the vector of requested files and $T$ denotes the delivery latency. Subsets of information resources are considered such that each subset is sufficient to decode the files at ENs or the user side. The first subset is comprised of  $i)$ The $0 \hspace{-.5mm}\leq \hspace{-.5mm} l \hspace{-.5mm}\leq \hspace{-.5mm} \min (M,K)$ outputs of wireless edge channel, without loss of generality the output of channel at the first user to
the $l^{th}$ user is considered, namely $Y^T_{[1:l]}$, $ii)$ The cached contents at $(M-l)$ ENs $S_{[1:(M-l)]}$ , $iii)$ The $(M-l)$ outputs of wireless multicast fronthaul channel  $V_{[1:(M-l)]}^T$. We have
%\vspace{-3mm} 
%\begin{align}
%&KL=H(F_{[1:K]}) \overset{(a)}= H(F_{[1:K]}|F_{[K+1:N]})  \nonumber  \\ 
%&\overset{(b)} \leq \hspace{-1mm} I(F_{[1:K]};Y^T_{[1:l]}|F_{[K+1:N]}) \hspace{-1mm} + \hspace{-1mm} H(F_{[1:l]}|Y^T_{[1:l]}) \nonumber  \\  
%&+\hspace{-1mm} H(V_{[1:(M-l)]}^T,S_{[1:(M-l)]}| F_{[1:l] \cup [K+1:N]}) \nonumber  \\
%&- \hspace{-1mm}H(V_{[1:(M-l)]}^T,S_{[1:(M-l)]}|Y^T_{[1:l]}, F_{[1:N]})\overset{(c)} \leq \hspace{-1mm}h(Y^T_{[1:l]}) \nonumber  \\
%& - h(Y^T_{[1:l]}|F_{[K+1:N]}) + L \epsilon_L + H(V_{[1:(M-l)]}^T| F_{[1:l] \cup [K+1:N]})  \nonumber  \\  
%&\hspace{-1mm} +\hspace{-1mm}\hspace{-2mm}\sum_{i=1}^{(M-l)} \hspace{-2mm}H\hspace{-.5mm}(\hspace{-.5mm}S_{i}| F_{[1:l] \cup [K+1:N]}\hspace{-.5mm})\hspace{-1mm}-\hspace{-1mm}H\hspace{-.5mm}(\hspace{-.5mm}V_{[1:(M-l)]}^T,\hspace{-.5mm}S_{[1:(M-l)]}|Y^T_{[1:l]},\hspace{-.5mm} F_{[1:N]}\hspace{-.5mm}) \nonumber  \\
%&\overset{(d)} \leq \hspace{-1mm} lT \log(P) \hspace{-1mm} + \hspace{-1mm} L \epsilon_L \hspace{-1mm}+ \hspace{-1mm}H(G_{[1,M-l]}U^T\hspace{-1mm}+\hspace{-1mm}W_{[1,M-l]}^T| F_{[1:l] \cup [K+1:N]}) \nonumber  \\ 
%&+\hspace{-1mm} \sum_{i=1}^{(M-l)} H(S_{i}| F_{[1:l] \cup [K+1:N]}) \nonumber  \\
%&\overset{(e)} \leq (l+r) T \log(P) + L \epsilon_L  + T \epsilon_P + (M-l)(K-l)\mu L,   \label{PoC1}
%\end{align}
\vspace{-3mm} 
\begin{align}
&KL=H(F_{[1:K]})   \hspace{-1mm} \overset{(a)} \leq \hspace{-1mm} lT \log(P) \hspace{-1mm} + \hspace{-1mm} L \epsilon_L \hspace{-1mm}+\hspace{-3mm} \sum_{i=1}^{(M-l)}\hspace{-2mm} H(S_{i}| F_{[1:l] \cup [K+1:N]}) \hspace{-1mm} \nonumber  \\ 
&+ \hspace{-1mm}H(G_{[1,M-l]}U^T\hspace{-1mm}+\hspace{-1mm}W_{[1,M-l]}^T| F_{[1:l] \cup [K+1:N]})  \nonumber  \\
&\hspace{-1mm} \overset{(b)} \leq (l+r) T \log(P) \hspace{-1mm}+\hspace{-1mm} L \epsilon_L \hspace{-1mm} +\hspace{-1mm} T \epsilon_P \hspace{-1mm}+\hspace{-1mm} (M-l)(K-l)\mu L,   \label{PoC1}
\end{align}  
where $(a)$ is obtained by using first \cite[eq. 64-(a)]{Avik2}, Fano's inequality where $\epsilon_L \rightarrow 0$ as $L \rightarrow \infty$ and using \cite[Lemma 5]{Avik2} and $(b)$ is obtained using $H(S_{i}) \leq \mu NL$ and $\epsilon_P/ \log P \rightarrow 0$ as $P \rightarrow \infty$. Rearranging \eqref{PoC1} and taking the limits results in the minimum NDT
\begin{align}
\delta^{pl^*}_{\text{off}}(\mu,r)  \geq  \frac{K-(M-l)(K-l)\mu}{l + r},    \label{PoC3}
\end{align}   
for $l \leq \min(M,K)$.
The second subset is $K$ received signals by $K$ users i.e., $Y^T_{[1:K]}$. Following the same argument to prove \cite[eq. 69]{Avik2} and the equation after that, we have 
\begin{align}
\delta^{pl^*}_{\text{off}}(\mu,r)  \geq \frac{K}{\min (M,K)}.    \label{PoC9}
\end{align}
Using inequalities \eqref{PoC3} and \eqref{PoC9}, completes the proof. 
%\begin{align}
%\delta^{pl^*}_{\text{off}}(\mu,r)  \geq 1.    \label{PoC7}
%\end{align}   
%Furthermore, assuming $K>M$ and then plugging $r=0$ and $l=M$ into  \eqref{PoC3} , we have 
%\begin{align}
%\delta^{pl^*}_{\text{off}}(\mu,r)  \geq  \frac{K}{M}.    \label{PoC8}
%\end{align}
%Combining \eqref{PoC7} and \eqref{PoC8}, results in: 
%\begin{align}
%\delta^{pl^*}_{\text{off}}(\mu,r)  \geq \delta_2\triangleq\frac{K}{\min (M,K)}.    \label{PoC9}
%\end{align}
%Using inequalities \eqref{PoC3} and \eqref{PoC9}, completes the proof.
\section{Proof of Proposition \ref{Lb_On_Coro}}\label{Lb_On_Proof_Coro}
If a genie provides the side information to the ENs about the possible new file in  $\mathcal{F}_t$ and noting that this file will be requested by users with the probability of $Kp/N$, we have 
\begin{align}
\delta^{pl}_{\text{t}}(\mu,r)  \geq (1-Kp/N)\delta^{pl}_{\text{off,lb}} + (Kp/N) \delta^{pl}_{\text{on,lb}},   \label{NDT_t}
\end{align}
where $\delta^{pl}_{\text{off,lb}}$ is lower bound on NDT of the offline caching given in Proposition \ref{Lb_Off_Lem} and $\delta^{pl}_{\text{on,lb}}$ is the lower bound on the minimum NDT of online caching. Furthermore, we have:
\begin{align}
\frac{\delta^{pl^*}_{\text{off}}(\mu,r) }{\delta^{pl}_{\text{off,lb}}} \overset{(a)} \leq \frac{\delta^{pl}_{\text{off,ach}}(\mu,r) }{\delta^{pl}_{\text{off,b}}} \overset{(b)} < 2 ,  \label{ratios} 
\end{align} 
where $(a)$ is obtained using the fact that the optimum policy minimizes the achievable NDT and $(b)$ is obtained using \eqref{minoffline2}.  By plugging \eqref{ratios} into  \eqref{NDT_t} and then using \eqref{avNDT_P}, we have
\begin{align}
\bar{\delta}^{pl^*}_{\text{on}}(\mu,r)  \geq \frac{(1-Kp/N)}{2}\delta^{pl^*}_{\text{off}}(\mu,r) + (Kp/N) \delta^{pl}_{\text{on,lb}}.   \label{PoC10}
\end{align} 
Next, we have
\begin{align}
\hspace{-3mm} \delta^{pl}_{\text{on,lb}} & \overset{(a)} \geq \frac{K-(M-l)(K-l-1)\mu}{l + r} \nonumber \\ 
\hspace{-3mm} &= \frac{K\big (1- \mu (M-l) \big )+(M-l)(l+1) \mu}{r} \overset{(b)} \geq \frac{M \mu}{r}, \label{coro_b}
\end{align}
\vspace{-1mm}
%\textcolor{blue}{Assuming $\mu \in [0,1/(M-l)]$ with $0 \leq l \leq \min(M,K)$, \eqref{PoC11} can be rewritten as  
%\begin{align}
%\delta^{pl}_{\text{on,lb}}  \geq  \frac{K\big (1- \mu (M-l) \big )+(M-l)(l+1) \mu}{r} \overset{(a)} \geq \frac{(M-1) \mu}{r}, \label{coro_b}
%\end{align}
where $(a)$ is obtained similar to \eqref{PoC1} with only one caveat that in \eqref{PoC1}-$(e)$, $\sum_{i=1}^{(M-l)} H(S_{i}| F_{[1:l] \cup [K+1:N]}) \leq (M-l)(K-l-1)\mu L $ since one of the requested files is new and not stored at the ENs' caches and $(b)$ is obtained by dropping the positive term since $0 \leq \mu \leq 1/(M-l)$ and then setting $l=0$. Plugging \eqref{coro_b} into \eqref{PoC10}, completes the proof.
%\section{Proof of Corollary \ref{Lb_On_Coro}}\label{Lb_On_Proof_Coro} 
%Assuming $\mu \in [0,1/(M-l)]$ with $l \in [0,\min(M,L)]$, we have 
%\begin{align}
%\delta^{pl}_{\text{on,lb}}  &\overset{(a)}  \geq \frac{K-(M-l)(K-l-1)\mu}{l + r} \nonumber  \\ 
%&= \frac{K\big (1- \mu (M-l) \big )+(M-l)(l+1) \mu}{r} \overset{(b)} \geq \frac{M \mu}{r} \label{coro_b}
%\end{align}
%where $(a)$ follows from  Proposition \ref{Lb_On_Lem} and using $\delta'_1$, $(b)$ is obtained by dropping the first (positive) term since $0 \leq \mu \leq 1/(M-l)$ and then setting $l=0$.

%\section{proof of proposition \ref{final_uppro1}}\label{fund_r_pro}
%To prove Proposition \ref{final_uppro1}, we will show that for the system under study the following inequalities 

\ifCLASSOPTIONcaptionsoff
  \newpage
\fi

% trigger a \newpage just before the given reference
% number - used to balance the columns on the last page
% adjust value as needed - may need to be readjusted if
% the document is modified later
%\IEEEtriggeratref{8}
% The "triggered" command can be changed if desired:
%\IEEEtriggercmd{\enlargethispage{-5in}}

% references section

% can use a bibliography generated by BibTeX as a .bbl file
% BibTeX documentation can be easily obtained at:
% http://mirror.ctan.org/biblio/bibtex/contrib/doc/
% The IEEEtran BibTeX style support page is at:
% http://www.michaelshell.org/tex/ieeetran/bibtex/
%\bibliographystyle{IEEEtran}
% argument is your BibTeX string definitions and bibliography database(s)
%\bibliography{IEEEabrv,../bib/paper}
%
% <OR> manually copy in the resultant .bbl file
% set second argument of \begin to the number of references
% (used to reserve space for the reference number labels box)
\bibliographystyle{IEEEtran}
\bibliography{IEEEabrv,IEEEexample}

% Generated by IEEEtran.bst, version: 1.12 (2007/01/11)
\begin{thebibliography}{1}
\providecommand{\url}[1]{#1}
\csname url@samestyle\endcsname
\providecommand{\newblock}{\relax}
\providecommand{\bibinfo}[2]{#2}
\providecommand{\BIBentrySTDinterwordspacing}{\spaceskip=0pt\relax}
\providecommand{\BIBentryALTinterwordstretchfactor}{4}
\providecommand{\BIBentryALTinterwordspacing}{\spaceskip=\fontdimen2\font plus
\BIBentryALTinterwordstretchfactor\fontdimen3\font minus
  \fontdimen4\font\relax}
\providecommand{\BIBforeignlanguage}[2]{{%
\expandafter\ifx\csname l@#1\endcsname\relax
\typeout{** WARNING: IEEEtran.bst: No hyphenation pattern has been}%
\typeout{** loaded for the language `#1'. Using the pattern for}%
\typeout{** the default language instead.}%
\else
\language=\csname l@#1\endcsname
\fi
#2}}
\providecommand{\BIBdecl}{\relax}
\BIBdecl

\bibitem{OS}
R.~Tandon and O.~Simeone, ``Harnessing cloud and edge synergies: toward an
  information theory of fog radio access networks,'' \emph{{IEEE} Commun.
  Mag.}, vol. 54, no. 8, pp. 44--50, Aug. 2016.

\bibitem{Avik2}
A.~Sengupta, R.~Tandon, and O.~Simeone, ``{F}og-aided wireless networks for
  content delivery: Fundamental latency trade-offs,'' \emph{{IEEE} Trans. Inf.
  Theory}, vol. 63, no. 10, pp. 6650--6678, Oct. 2017.

\bibitem{JSACAzimi}
S.~M. Azimi, O.~Simeone, A.~Sengupta, and R.~Tandon, ``{O}nline edge caching
  and wireless delivery in fog-aided networks with dynamic content
  popularity,'' \emph{{IEEE} J. Sel. Areas Commun.}, vol. 36, no. 6, pp.
  1189--1209, Jun. 2018.

\bibitem{Koh}
J.~Koh, O.~Simeone, R.~Tandon, and J.~Kang, ``{C}loud-aided edge caching with
  wireless multicast fronthauling in fog radio access networks,'' in
  \emph{Proc. IEEE Wireless Commun. and Netw. Conf. (WCNC)}, pp. 1-6, San
  Francisco, CA, Mar. 2017.

\end{thebibliography}

% biography section
%
% If you have an EPS/PDF photo (graphicx package needed) extra braces are
% needed around the contents of the optional argument to biography to prevent
% the LaTeX parser from getting confused when it sees the complicated
% \includegraphics command within an optional argument. (You could create
% your own custom macro containing the \includegraphics command to make things
% simpler here.)
%\begin{IEEEbiography}[{\includegraphics[width=1in,height=1.25in,clip,keepaspectratio]{mshell}}]{Michael Shell}
% or if you just want to reserve a space for a photo:

% You can push biographies down or up by placing
% a \vfill before or after them. The appropriate
% use of \vfill depends on what kind of text is
% on the last page and whether or not the columns
% are being equalized.

%\vfill

% Can be used to pull up biographies so that the bottom of the last one
% is flush with the other column.
%\enlargethispage{-5in}

% that's all folks
\end{document}